\newcommand{\ehref}[1]{\href{mailto:#1}{#1}}
\numberwithin{equation}{section}
\declaretheoremstyle[bodyfont=\it,qed=\qedsymbol]{noproofstyle}
\declaretheorem[name=Observation,numbered=no]{observation*}
\declaretheorem[numberlike=equation]{theorem}
\declaretheorem[name=Theorem,numbered=no]{theorem*}
\declaretheorem[numberlike=equation]{lemma}
\declaretheorem[name=Lemma,numbered=no]{lemma*}
\declaretheorem[name=Corollary,numbered=no]{corollary*}
\declaretheorem[name=Proposition,numbered=no]{proposition*}
\declaretheorem[name=Claim,numbered=no]{claim*}
\declaretheorem[name=Conjecture,numbered=no]{conjecture*}
\declaretheorem[name=Question,numbered=no]{question*}
\declaretheoremstyle[bodyfont=\it,qed=$\lozenge$]{defstyle} 
\declaretheorem[numberlike=equation,style=defstyle]{definition}
\declaretheorem[unnumbered,name=Definition,style=defstyle]{definition*}
\declaretheorem[unnumbered,name=Example,style=defstyle]{example*}
\declaretheorem[unnumbered,name=Notation=defstyle]{notation*}
\declaretheorem[unnumbered,name=Construction,style=defstyle]{construction*}
\declaretheorem[unnumbered,name=Remark,style=defstyle]{remark*}
\newcommand{\gitinfonote}{git info:~\gitAbbrevHash\;,\;(\gitAuthorIsoDate)\; \;\gitVtag \; FRS codes = rmeoendcsoodleos }
\newif\ifdraft
\newcommand{\gitinfonotecolour}{Gray}
\newcommand{\RPnote}[1]{\textcolor{BrickRed}{$\langle$ RP: #1 $\rangle$}}
\newcommand{\AGnote}[1]{\textcolor{Blue}{$\langle$ AG: #1 $\rangle$}}
\newcommand{\phnote}[1]{\textcolor{purple}{$\langle$ sriprahladuvacha: #1 $\rangle$}}
\newcommand{\gitinfonotecolour}{white}
\newcommand{\RPnote}[1]{}
\newcommand{\AGnote}[1]{}
\newcommand{\phnote}[1]{}
\newcommand{\FRS}{\operatorname{FRS}}
\renewcommand{\epsilon}{\varepsilon}
\title{An exposition of recent list-size bounds of FRS Codes}
\author{Abhibhav Garg\thanks{University of Waterloo, Canada. \ehref{abhibhav.garg@uwaterloo.ca}. Work done when the author was visiting TIFR.}
\and
Prahladh Harsha\thanks{Tata Institute of Fundamental Research, Mumbai, India. \href{mailto:prahladh@tifr.res.in}{\{prahladh},\href{mailto:mrinal@tifr.res.in}{mrinal},\href{mailto:ramprasad@tifr.res.in}{ramprasad},\href{mailto:ashutosh.shankar@tifr.res.in}{ashutosh.shankar\}@tifr.res.in}.}
\and 
Mrinal Kumar\footnotemark[2]
\and
Ramprasad Saptharishi\footnotemark[2]
\and
Ashutosh Shankar\footnotemark[2]}
\date{}
\begin{document}
\onehalfspacing
\maketitle

\begin{abstract}

In the last year, there have been some remarkable improvements in the combinatorial list-size bounds of Folded Reed Solomon codes and multiplicity codes. Starting from the work on Kopparty, Ron-Zewi, Saraf and Wootters \cite{KoppartyRSW2023} (and subsequent simplifications due to Tamo \cite{Tamo2024}), we have had dramatic improvements in the list-size bounds of FRS codes\footnote{While all the results in this note refer to FRS codes, they extend to all affine FRS codes, which includes multiplicity codes and additive-FRS codes.} due to Srivastava \cite{Srivastava2025} and Chen \& Zhang \cite{ChenZ2025}. In this note, we give a short exposition of these three results (Tamo, Srivastava and Chen-Zhang). 
\end{abstract}

\section{Introduction}

We start by defining Folded Reed Solomon (FRS) codes and list decoding capacity. Folded Reed-Solomon codes were introduced by Krachkovsky~\cite{Krachkovsky2003}, and then re-discovered by Guruswami and Rudra \cite{GuruswamiR2008} in the context of list-decoding.
Let $\bF_{q}$ be a finite field of $q$ elements, with $q > k$. 
\begin{definition}[folded Reed-Solomon codes (FRS) \cite{Krachkovsky2003,GuruswamiR2008}]
  Let $S = \set{\alpha_1,\ldots, \alpha_n}$ be a set of $n$ distinct elements in $\bF_q$ and let $\gamma$ be a generator of $\bF_q^*$. 
  The \emph{folded Reed-Solomon code} with parameters $(k, S, s)$ is defined via the following map:
  \begin{align*}
    \FRS_{k,s}\colon \bF_q[x]^{<k} &\rightarrow (\bF_q^s)^n\\
    f(x) & \mapsto \inparen{\insquare{\begin{array}{c} f(\alpha_1) \\ f(\gamma \alpha_1)\\ \vdots\\ f(\gamma^{s-1} \alpha_1) \end{array}}, \ldots, \insquare{\begin{array}{c} f(\alpha_n) \\ f(\gamma \alpha_n)\\ \vdots\\ f(\gamma^{s-1} \alpha_n) \;\end{array}}}\;.
  \end{align*}
  The parameter $s$ is also referred to as the \emph{folding parameter} of the $\FRS$ code. 

  The \emph{rate} of the above code shall be donoted by $R$, with $R := k / ns$, and it is known that the fractional distance of the code is $1 - R$.
\end{definition}
For the rest of the article, the set $S = \set{\alpha_1,\dots, \alpha_n}$ will be fixed and we will just refer to the FRS code as $\FRS_{k,s}$ code. We overload notation and use the same symbol to refer to both the polynomials (which correspond to messages) and their encodings under the above map.

\paragraph{List-decodability:} The notion of distance between codewords would be the standard Hamming distance. 

\begin{definition}[Hamming balls] 
  \label{defn:hamming-ball}
  For any point $y \in \Sigma^n$ for some alphabet $\Sigma$, we denote the \emph{Hamming ball of fractional radius $\rho$ around $y$} by $B(y, \rho)$ defined as
  \[
  B(y,\rho) := \setdef{x \in \Sigma^n}{\abs{\setdef{i\in [n]}{x_i \neq y_i}} < \rho n} \;.\qedhere
  \]
\end{definition}

The primary objective in list-decodability is to understand up to what radius do Hamming balls have ``few'' codewords. 

\begin{definition}[List-decodability]
  \label{defn:list-decodability}
  A code $\cC \subseteq \Sigma^n$ is said to be $(\rho, L)$ list-decodable if for every $y \in \Sigma^n$ we have
  \[
  \abs{C \cap B(y, \rho)} \leq L\;. \qedhere
  \]
\end{definition}
Folded Reed-Solomon codes were shown to achieve list-decoding capacity by Guruswami and Rudra~\cite{GuruswamiR2008}. That is, the set of codewords in a ball of radius $1 - R - \varepsilon$ around any point in the code space is small. 

Guruswami and Wang~\cite{GuruswamiW2013} re-proved this result in the following specific way: for FRS codes with folding parameter $O(1 / \varepsilon^{2})$, for any point $y$ in the code space, they show the existence of a linear subspace $\cA \subset \bF_{q}\bs{x}^{<k}$ with $\dim \cA = O(1 / \varepsilon)$ such that every codeword in the ball $B(y, 1 - R - \varepsilon)$ is the encoding of a polynomial in $\cA$.
This implies that the list size is at most $q^{O(1 / \varepsilon)}$.
Their proof of the existence of the subspace is algorithmic. We state this as a lemma below.

\begin{lemma}[Guruswami-Wang \cite{GuruswamiW2013}]
  \label{lemma:Guruswami}
  Let $y \in \br{\bF_q^s}^n$ be a received word for a $\FRS_{k,s}$ code of rate $R = k/ns$. Then, for $\rho = 1 - R - \epsilon$, if $s = \Omega(1/\epsilon^2)$, there is an affine space $\cA$ of dimension $O(1/\epsilon)$ that contains all codewords in $B(y,\rho) \cap C$. Furthermore, an affine basis for $\cA$ can be obtained in time polynomial in $n,\log q, 1/\epsilon$ given the received word $y$. 
\end{lemma}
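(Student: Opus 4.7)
The plan is to follow the linear-algebraic template of Guruswami--Wang, which has three steps: an interpolation step producing a multivariate ``linearized'' polynomial, a root-finding step that extracts a functional equation every candidate close codeword must satisfy, and a dimension-counting step on the solution set of that equation. First I would look for a nonzero polynomial of the form
\[
Q(x, y_0, \ldots, y_{r-1}) \;=\; A_0(x) + A_1(x)\, y_0 + A_2(x)\, y_1 + \cdots + A_r(x)\, y_{r-1},
\]
where $r = \Theta(1/\epsilon)$, with degree bounds $\deg A_0 \leq D$ and $\deg A_j \leq D - k + 1$ for $j \geq 1$, vanishing on every length-$r$ window of the received word: $Q(\alpha_i \gamma^\ell,\, y_{i,\ell},\, y_{i,\ell+1},\, \ldots,\, y_{i,\ell+r-1}) = 0$ for all $i \in [n]$ and $\ell \in \{0, 1, \ldots, s-r\}$. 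A standard parameter count (roughly $(r+1)D$ unknowns versus $n(s-r+1)$ constraints) shows that $D \approx (n(s-r+1) + rk)/(r+1)$ suffices for a nonzero $Q$ to exist, found by Gaussian elimination.

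For any $f \in B(y, \rho) \cap C$, consider the univariate polynomial $R_f(x) := Q(x,\, f(x),\, f(\gamma x),\, \ldots,\, f(\gamma^{r-1} x))$, whose degree is at most $D$. Call a position $i$ \emph{good} if the entire $i$th block of $y$ matches the encoding of $f$; the distance hypothesis forces at least $(1-\rho) n = (R + \epsilon) n$ good positions, each contributing $s - r + 1$ distinct zeros of $R_f$. Choosing $r = \Theta(1/\epsilon)$ and $s = \Theta(1/\epsilon^2)$ makes $(R+\epsilon) n (s-r+1) > D$, forcing $R_f \equiv 0$ and therefore the functional equation
\[
A_0(x) + A_1(x) f(x) + A_2(x) f(\gamma x) + \cdots + A_r(x) f(\gamma^{r-1} x) \;=\; 0. \qquad (\star)
\]

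The set of $f \in \bF_q[x]^{<k}$ satisfying $(\star)$ is an affine subspace; by subtracting two solutions, it suffices to bound the dimension of the associated homogeneous solution space by $r - 1$. Writing $f = \sum_i f_i x^i$ and using $f(\gamma^{j-1} x) = \sum_i f_i \gamma^{i(j-1)} x^i$, the homogeneous equation rewrites as $\sum_i f_i\, x^i\, B_i(x) = 0$ where $B_i(x) := \sum_{j=1}^r \gamma^{i(j-1)} A_j(x) = L(x, \gamma^i)$, the $y$-evaluation of $L(x, y) := \sum_j A_j(x)\, y^{j-1}$ (a polynomial of $y$-degree $r - 1$). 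After dividing out the largest common $x$-power of the $A_j$'s so that some $A_j(0) \neq 0$, the constant term $B_i(0) = L(0, \gamma^i)$ is nonzero for all but at most $r - 1$ values of $i$. For those indices, the polynomial $x^i B_i(x)$ has the nonzero monomial $B_i(0)\, x^i$ as its unique lowest-degree term, so the corresponding $x^i B_i(x)$ are linearly independent, and hence the kernel of $(f_i) \mapsto \sum_i f_i\, x^i B_i(x)$ has dimension at most $r - 1 = O(1/\epsilon)$.

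All three steps reduce to linear algebra over $\bF_q$ on matrices of size $\mathrm{poly}(n, s)$, yielding the claimed $\mathrm{poly}(n, \log q, 1/\epsilon)$ runtime. The \textbf{main obstacle} is the dimension bound in the third step: a naive rank-nullity count yields only $k - (D+1)$, which is far too weak. The Vandermonde structure $\gamma^{i(j-1)}$ is what saves us, ensuring that the low-degree behaviour of the vectors $x^i B_i(x)$ is controlled by the single univariate polynomial $L(0, y)$ of degree $r - 1$, and it is this structural feature that drives the bound down to the desired $r - 1$.
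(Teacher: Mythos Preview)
The paper does not supply its own proof of this lemma: it is quoted from Guruswami--Wang \cite{GuruswamiW2013} and used thereafter as a black box (the appendix proves the Guruswami--Kopparty lemma, \cref{thm:kopparty-guruswami}, which is a different statement). So there is no ``paper's proof'' to compare against.

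That said, your proposal is a faithful and correct reconstruction of the original Guruswami--Wang linear-algebraic decoder. The three-step structure (interpolate a linearised $Q$ by parameter counting; force $R_f\equiv 0$ by zero-counting; bound the dimension of the affine solution set of $(\star)$) is exactly their argument, and your treatment of the crucial third step --- rewriting the homogeneous equation as $\sum_i f_i\,x^iB_i(x)=0$ with $B_i(0)=L(0,\gamma^i)$, then using that $L(0,y)$ is a nonzero degree-$(r{-}1)$ polynomial so at most $r{-}1$ of the $\gamma^i$ kill it, hence at least $k-(r{-}1)$ of the $x^iB_i(x)$ have pairwise distinct lowest-order monomials and are therefore independent --- is precisely the place where the multiplicative folding structure of FRS is used, and your identification of it as the ``main obstacle'' is apt.

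Two minor points a full write-up would need to make explicit, neither of which is a gap in the approach: (i) the inequality $(R+\epsilon)\,n(s-r+1)>D$ is only comfortably strict with the right constants (e.g.\ $r\approx 2/\epsilon$ and $s\approx 4/\epsilon^2$ work; taking $r=1/\epsilon$, $s=1/\epsilon^2$ on the nose is borderline), so the $\Theta$'s hide a genuine constant-chasing step; and (ii) before dividing out the common $x$-power from $A_1,\ldots,A_r$ one should observe that these cannot all vanish, since otherwise $Q=A_0$ would be a nonzero polynomial of degree $\leq D<n(s-r+1)$ with $n(s-r+1)$ distinct roots.
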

Subsequently, Kopparty, Saraf, Ron-Zewi and Wootters~\cite{KoppartyRSW2023} showed that the upper bound on the list size at radius $1 - R - \varepsilon$ for such codes can be improved from polynomial $q^{O(1 / \varepsilon)}$ to constant $(1 / \varepsilon)^{O(1 / \varepsilon)}$.
A cleaner analysis of this upper bound was given by Tamo~\cite{Tamo2024}. These proofs were also algorithmic: they build on the previous result by taking the subspace as input and ``pruning'' the list size in a randomized fashion. In particular, if $\cL$ denotes the list of codewords in the ball $B(y,1-R-\epsilon)$, i.e., $\cL := B(y, 1 - R - \epsilon) \cap \FRS_{k,s}$, then the KRSW/Tamo improvement can be written as follows, a further simplified proof of which is presented in \cref{sec:KRSW}.

\begin{theorem}[\cite{KoppartyRSW2023,Tamo2024}]
  The size of $\cL$ is upper-bounded by $(1/\epsilon)^{O(1/\epsilon)}$. 
\end{theorem}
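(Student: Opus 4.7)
The plan is to invoke the Guruswami-Wang \cref{lemma:Guruswami} to reduce to controlling list sizes within low-dimensional affine subspaces, and then to carry out a clean induction on the dimension. By \cref{lemma:Guruswami}, the list $\cL$ is contained in an affine subspace $\cA \subseteq \bF_q[x]^{<k}$ of dimension $d = O(1/\epsilon)$; so it suffices to prove the following: for any affine subspace $\cA' \subseteq \bF_q[x]^{<k}$ of dimension $d'$, the list $\cL' := \cA' \cap B(y, 1 - R - \epsilon) \cap \FRS_{k,s}$ satisfies $|\cL'| \le (1/\epsilon)^{d'}$. Plugging in $d' = d = O(1/\epsilon)$ then yields the claimed $(1/\epsilon)^{O(1/\epsilon)}$ bound.

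I would prove this by induction on $d'$. The base case $d' = 0$ is trivial since $\cA'$ is a single point. For the inductive step, for each column $j \in [n]$ define the affine slice $S_j := \set{f \in \cA' : f|_j = y_j}$ and double-count the column-agreement indicator over $(f, j) \in \cL' \times [n]$:
\[
  \sum_{j \in [n]} |S_j \cap \cL'| \;=\; \sum_{f \in \cL'} |\set{j \in [n] : f|_j = y_j}| \;\ge\; (R + \epsilon) n \cdot |\cL'|.
\]
For any column $j$ with $S_j \subsetneq \cA'$, the slice has dimension at most $d' - 1$, so the induction hypothesis gives $|S_j \cap \cL'| \le (1/\epsilon)^{d'-1}$. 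Let $T := \set{j \in [n] : S_j = \cA'}$ be the remaining ``trivial'' columns, each of which contributes $|\cL'|$ to the sum. For any two distinct $f, g \in \cA'$, the nonzero polynomial $f - g \in \bF_q[x]^{<k}$ vanishes on all $s|T|$ points in the trivial columns, forcing $s|T| \le k - 1$ and hence $|T| < k/s = Rn$. Combining,
\[
  (R + \epsilon) n \cdot |\cL'| \;\le\; |T| \cdot |\cL'| \;+\; (n - |T|) \cdot (1/\epsilon)^{d'-1},
\]
and using $(R + \epsilon) n - |T| > \epsilon n$ yields $|\cL'| \le (1/\epsilon)^{d'}$, completing the induction.

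The main subtlety (and the source of the entire improvement over the raw Guruswami-Wang bound of $q^{O(1/\epsilon)}$) is precisely the trivial-column bound $|T| < Rn$: without such control, the double-counting would only give the very weak recursion $|\cL'| \le q \cdot (1/\epsilon)^{d'-1}$. Trivial columns are few because FRS codewords are evaluations of polynomials of degree $< k$, and any two distinct elements of $\cA'$ differ by a nonzero such polynomial, which cannot vanish on too many of the $s$-tuples $\set{\alpha_j, \gamma \alpha_j, \ldots, \gamma^{s-1}\alpha_j}$ because of the positive distance of the code. This is the one place where the specific structure of FRS codes enters the argument; the rest is a purely combinatorial double-counting.
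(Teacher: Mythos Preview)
Your proof is correct. It is, however, a different presentation from the paper's. The paper gives a probabilistic ``certificate'' argument: it defines a distribution on sequences of coordinates by picking one uniformly at random at a time, and shows that for each fixed $f\in\cL$ the process produces a minimal certificate for $f$ (one that pins $f$ down uniquely inside $\cA$) of length at most $r$ with probability at least $\epsilon^r$; since certificates for distinct $f$ are disjoint events, $|\cL|\le \epsilon^{-r}$. The key step there is exactly your distance observation, phrased per-codeword: given any $f'\neq f$ still alive in $\cA^{(j)}$, at least $\epsilon n$ coordinates separate $f$ from $f'$ while keeping $f$ consistent with $y$.

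Your argument is the natural deterministic unwinding of this via induction on $\dim\cA'$ and a global double count over the agreement graph, splitting columns into ``trivial'' ($S_j=\cA'$) and dimension-dropping ones. This is in fact closer in spirit to the machinery the paper deploys in the later Srivastava and Chen--Zhang sections than to its own Section~2 proof. One small remark on your closing commentary: the bound $|T|<Rn$ uses only the minimum distance of the code (any two distinct codewords agree on at most $(1-\delta)n$ coordinates), not the polynomial structure of FRS specifically. The paper states and proves the result for arbitrary $\bF$-linear codes with distance $\delta$ and radius $\rho=\delta-\epsilon$, and your argument extends verbatim to that generality by replacing ``$f-g$ has $<k$ roots'' with the distance bound.
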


How small can the list-size bound be? Shangguan and Tamo \cite{ShangguanT2023} generalized the classical Singleton bound to show that any code with rate $R$ that is $(1-R-\epsilon,L)$ list-decodable satisfies $L \geq \frac{1-R-\epsilon}{\epsilon}$. Very recently, Srivastava \cite{Srivastava2025} and Chen \& Zhang \cite{ChenZ2025} gave dramatic improvements on this list-size to $O(1/\epsilon)$ almost matching the generalized Singleton bound up to a constant multiplicative factor.

\begin{theorem}[\cite{Srivastava2025}]
  The size of $\cL$ is upper-bounded by $O(1/\epsilon^2)$. 
\end{theorem}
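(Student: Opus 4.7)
My plan starts from \cref{lemma:Guruswami}, which contains $\cL$ in an affine space $\cA \subseteq \bF_q[x]^{<k}$ of dimension $d = O(1/\epsilon)$. Fix an affine parameterisation $\cA = f_0 + \mathrm{span}(v_1,\ldots,v_d)$, so that each codeword $f \in \cL$ is determined by a coefficient tuple $c(f) \in \bF_q^d$; the task reduces to bounding the number of such tuples by $O(1/\epsilon^2)$, a polynomial improvement over the KRSW/Tamo bound of $(1/\epsilon)^{O(1/\epsilon)}$.

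The natural combinatorial structure to exploit is the incidence between codewords and columns: for every $i \in [n]$, the condition that $f$ agrees with $y$ on column $i$ translates, via the parameterisation of $\cA$, into an affine constraint on $c(f)$, cutting out an affine subspace $\cA_i \subseteq \bF_q^d$; each $f \in \cL$ satisfies this for at least $(R+\epsilon)n$ columns. A naive use of this density, e.g.\ iteratively restricting to a popular $\cA_{i^\star}$ and recursing, only reproduces the $(1/\epsilon)^{O(1/\epsilon)}$ bound, since each of the $O(1/\epsilon)$ levels of recursion contributes a multiplicative factor of $\Omega(1/\epsilon)$.

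To obtain the $O(1/\epsilon^2)$ improvement, the plan is to exploit algebraic rigidity of the family $\{\cA_i\}_{i \in [n]}$ that is invisible to the dimension count on $\cA$ alone. Concretely, I would search for a small collection of low-degree polynomials in $c_1,\ldots,c_d$ that vanish at every $c(f)$ for $f \in \cL$, with joint zero set of only $O(1/\epsilon^2)$ points in $\bF_q^d$. Plausible candidates include folded Wronskian-type determinants built from the $v_j$'s, resultants of the linear equations defining various $\cA_i$'s, or auxiliary ``shadow'' polynomials produced by a refinement of the Guruswami--Wang interpolation. Once such a family is identified, the claimed bound should follow from a B\'ezout or Schwartz--Zippel style count.

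The main obstacle is precisely the construction of these low-degree polynomials. The KRSW/Tamo approach essentially treats the $\cA_i$'s as arbitrary affine subspaces and uses only linear-algebraic pigeonholing; the challenge is to leverage the fact that they all arise from evaluations of polynomials of degree $<k$ at the points $\gamma^j \alpha_i$, which imposes strong compatibility among them via the multiplicative action of $\gamma$ on $\bF_q^*$. Isolating and quantifying this structure --- likely through a carefully chosen determinantal identity or a rank argument tailored to the FRS setting --- is the genuinely new ingredient I expect \cite{Srivastava2025} to supply, and is the crux of the argument.
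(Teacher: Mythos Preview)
Your proposal is not a proof but a speculative plan, and the central step you flag as ``the main obstacle'' --- constructing a family of low-degree polynomials in the coordinates $c(f)$ whose common zero set has size $O(1/\epsilon^2)$ --- is never carried out. Without that construction there is simply no argument here; a B\'ezout or Schwartz--Zippel count is vacuous until the polynomials exist, and you give no indication of how to produce them or why their degrees would be small enough to yield the claimed bound.

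More importantly, this is not the route the paper takes, and I do not see how to make your algebraic-geometry outline work. The paper's proof uses two concrete ingredients that your sketch misses. First, the key algebraic input is the Guruswami--Kopparty lemma (\cref{thm:kopparty-guruswami}), which uses the folded Wronskian not to cut out $\cL$ as a variety but to bound the \emph{total} dimension $\sum_{i\in[n]} \dim \cA_i \leq r\cdot \tau_r \cdot n$ with $\tau_r \approx R$. This says that on average the column-restriction $\cA_i$ has dimension roughly $Rr$, not merely $r-1$; that is the ``algebraic rigidity'' you are looking for, and it is linear-algebraic rather than variety-theoretic. Second, the list-size bound is obtained by induction on $r$: one counts edges in the agreement graph $G(\cL,y)$ from both sides, applies the inductive hypothesis $L(r_i)\leq (t-1)r_i+1$ on each $\cA_i$ with $r_i<r$, uses the Guruswami--Kopparty bound to control $\sum r_i$, and handles the exceptional set of coordinates with $r_i=r$ via the code distance. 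The recursion closes with $L(r)\leq (t-1)r+1$, giving $O(1/\epsilon^2)$ after parameter-setting. Your ``naive'' recursion lost a factor of $1/\epsilon$ per level precisely because it ignored the average-dimension bound; the fix is quantitative control on $\sum r_i$, not a passage to higher-degree equations.
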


\begin{theorem}[\cite{ChenZ2025}]
  The size of $\cL$ is upper-bounded by $O(1/\epsilon)$. 
\end{theorem}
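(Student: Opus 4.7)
The plan is to exploit the affine subspace $\cA \subseteq \bF_q[x]^{<k}$ of dimension $d = O(1/\epsilon)$ provided by \cref{lemma:Guruswami}, which already contains all of $\cL$. The target bound $|\cL| = O(1/\epsilon)$ matches both the dimension $d$ and the generalized Singleton lower bound $(1-R-\epsilon)/\epsilon$ up to constants, so I expect the proof to proceed via a tight dimension-counting argument inside $\cA$, rather than via a random-pruning argument of the KRSW/Tamo type.

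As a warm-up I would first handle the case $\dim \cA = 1$, i.e.\ $\cA = \{f_0 + tg : t \in \bF_q\}$ for some nonzero $g$ of degree less than $k$. Call a position $j \in [n]$ \emph{stable} if $g$ vanishes on the entire folded block $\alpha_j, \gamma \alpha_j, \ldots, \gamma^{s-1}\alpha_j$, and \emph{transient} otherwise. Since $\deg g < k$, there are strictly fewer than $k/s = Rn$ stable positions. At a stable position the agreement of $f_t$ with $y$ is the same for every $t$, while at a transient position the agreement condition forces $t$ to take a unique value on some nonvanishing coordinate, so at most one $t$ yields agreement. Double-counting agreements over the $L$ codewords in $\cL$ then gives $L(R+\epsilon)n \leq L \cdot |S| + (n - |S|) \leq L \cdot Rn + n$ (where $|S|$ is the number of stable positions), which simplifies to $L \leq (1-R)/\epsilon$ --- already matching the generalized Singleton bound up to constants.

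The core of the proof is the extension to arbitrary dimension $d$. A naive inductive slicing of $\cA$ by parallel lines and applying the one-dimensional bound fiberwise would lose a multiplicative factor of $d$, recovering only Srivastava's $O(1/\epsilon^2)$ bound. To save this factor I would try a global argument combining the ambient dimension $d$ with the pairwise constraint that for distinct $f, f' \in \cL$, the difference $f - f'$ is a nonzero polynomial of degree less than $k$, forcing $|A_f \cap A_{f'}| \leq (k-1)/s < Rn$. A natural first attempt is a Gram-matrix / rank argument on the indicator vectors of the agreement sets, where the pairwise-intersection bound controls the off-diagonals while the ambient dimension $d$ controls the rank.

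The main obstacle is exactly this jump from dimension $1$ to dimension $d$ without paying the multiplicative factor of $d$. Purely combinatorial inequalities on the agreement sets --- inclusion-exclusion, or Cauchy--Schwarz on the second moment $\sum_j m_j^2$ of the agreement multiplicities --- are visibly not tight enough, which is why they only suffice for Johnson-type or Srivastava-type bounds. Saving the factor of $d$ likely requires using the algebraic structure of $\cA$ as the solution space of the specific Guruswami--Wang interpolating polynomial, not merely as a generic low-dimensional affine subspace; in effect, one must amortize the one-dimensional argument across many directions inside $\cA$ simultaneously.
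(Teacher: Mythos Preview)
Your proposal is not a proof but a plan that stops precisely at the hard step, and the direction you gesture toward at the end is not the one that works. Two concrete gaps:

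\textbf{Wrong starting point.} You begin by invoking \cref{lemma:Guruswami} to place $\cL$ inside a fixed $O(1/\epsilon)$-dimensional affine space $\cA$. The paper's argument does \emph{not} use Guruswami--Wang at all (as the remark after \cref{thm:chen-zhang} stresses). Instead, one assumes for contradiction that there are $t+1$ codewords in the ball; their affine span automatically has dimension at most $t$, and that is the only dimension bound needed. Your suggestion to exploit ``the algebraic structure of $\cA$ as the solution space of the specific Guruswami--Wang interpolating polynomial'' is a red herring --- nothing about that polynomial enters.

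\textbf{Missing the actual mechanism.} The way the factor of $d$ is saved is not via a Gram-matrix or rank argument on agreement indicators. The paper inducts on the \emph{number of codewords} $m$ and bounds the number of \emph{edges} $E_G$ in the agreement graph, proving $E_G \leq (m-1)\cdot n\cdot \tau_r + n_G$ (\cref{lem:chen-zhang}). The inductive step partitions the $m$ codewords along an affine flag into nonempty parts $H_0,\dots,H_r$ with the property that any transversal (one codeword from each part) is affinely independent. Applying induction to each $H_i$ and summing, the only error term is $\sum_i n_{H_i} - n_G = \sum_j (t_j - 1)$, where $t_j$ counts how many parts touch coordinate $j$. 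The transversal property forces $t_j - 1 \leq r_j := \dim\cA_j$, and then the Guruswami--Kopparty lemma (\cref{thm:kopparty-guruswami}, via the folded Wronskian) gives $\sum_j r_j \leq r\cdot n\cdot \tau_r$, exactly cancelling the deficit. None of this is captured by your outline; in particular, the switch from inducting on list size to inducting on edge count is the idea that distinguishes Chen--Zhang from Srivastava, and it is absent from your proposal.
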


We will give simplified proofs of these improvements in \cref{sec:Sri,sec:CZ}. It is to be noted that these proofs are combinatorial and algorithmizing them (efficiently; in time nearly-linear or even polynomial in the list size) remains open. The results of KRSW/Tamo also extend to list-recovery. However, as Chen-Zhang observe the dramatic improvements on list-size bounds for list-decoding FRS codes obtained by Srivastava and Chen \& Zhang do not extend to list-recovery of FRS codes. We (re-)present the Chen \& Zhang counterexample in \cref{sec:recovery}.

\paragraph{Agreement graphs:} A key ingredient we will be using in the proofs of these improvements is the notion of an agreement graph, which we define below.
\begin{definition}[Agreement graph]
  \label{defn:agreement-graph} For any code $\cC \subseteq \Sigma^n$, message $y \in \Sigma^{n}$, and a set of distinct codewords $f_1, \dots, f_m \in \cC$, the \emph{agreement graph} $G(\set{f_1,\dots f_m}, y)$ is defined as the bipartite graph, with $m$ vertices on the left (corresponding to the list of codewords) and $n$ vertices on the right (corresponding to the blocks), and an edge connecting $i\in [m]$ on the left with $j \in [n]$ on the right if the encoding of $f_i$ agrees with $y$ at coordinate $j$. 
\end{definition}

All the proofs will essentially attempt to upper-bound the number of edges in any agreement graph, and thereby infer that there cannot be ``too many'' left-vertices (codewords) with ``large degree'' (agreement). 

\section{KRSW/Tamo's upper bound for list size}\label{sec:KRSW}

The bounds due to Kopparty, Ron-Zewi, Saraf and Wootters work for any linear code, not necessarily FRS codes and we will also state the results in that generality. Let $\bF$ be any finite field, $s$ a positive integer and $\cC \subseteq (\bF^s)^n$ be an $\bF$-linear code over the alphabet $\bF^s$ with block length $n$ and fractional distance $\delta$.

Let $y \in \br{\bF_q^s}^n$ be an arbitrary point in the code space of $\cC$.
Let $\cL := \bc{f_{1}, \dots, f_{t}} = B(y, \rho) \cap \cC$ be the list of codewords at distance at most $\rho = \delta - \varepsilon$ from $y$.

\begin{definition}[Certificates]
  Let $\cA$ be an affine subspace of an $\bF$-linear code $\cC \subseteq (\bF^s)^n$.
  A \emph{certificate with respect to $y$} is a sequence of coordinates $(i_1,\ldots, i_a)$ (each $i_j \in [n]$) such that there is a unique codeword $f\in \cA$ that agrees with $y$ at the coordinates $i_1,\ldots, i_a$;  we shall say that this is a \emph{certificate for $f$}. 

  We shall call such a certificate a \emph{minimal certificate} if $i_1,\ldots, i_{a'}$ is \emph{not} a certificate for any $a' < a$. 
\end{definition}

Equivalently, if $G = G(\cL, y)$ is the agreement graph, then a certificate for $f$ identifies a set of right vertices with unique common neighbour being $f$.

\begin{theorem}[\cite{KoppartyRSW2023,Tamo2024}]
  Let $\cC \subseteq (\bF^s)^n$ be an $\bF$-linear code with distance $\delta$ and $\rho = \delta - \epsilon$ and $y \in \br{\bF_q^s}^n$ an arbitrary message.
  Suppose $\cL := B(y,\rho) \cap \cC$ is contained in an affine space of dimension $r$. 

  Then there is a probability distribution on minimal certificates with respect to $y$ such that for any $f \in \cL$, the set of of minimal certificates for $f$ of length at most $r$ has probability mass at least $\epsilon^r$

  In particular, the size of $\cL$ is upper-bounded by $(1/\epsilon^r)$. 
\end{theorem}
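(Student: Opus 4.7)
The plan is to define a probability distribution on sequences of coordinates via a simple randomized process, and then show that, for each $f \in \cL$, the event that the output is a minimal certificate for $f$ has probability mass at least $\epsilon^r$. Since distinct codewords in $\cL$ give rise to disjoint events, the ``in particular'' conclusion $|\cL| \leq \epsilon^{-r}$ follows immediately by taking a union.

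The process I would use works as follows. Assume without loss of generality that $\cA \subseteq \cC$ (by replacing $\cA$ with $\cA \cap \cC$, which is still affine, still contains $\cL$, and has dimension at most $r$). Initialize $\cA_0 = \cA$ and an empty sequence $C$. At each step $j \geq 0$, pick $i_{j+1} \in [n]$ uniformly at random and independently of the history, and let $\cA_{j+1} := \setdef{g \in \cA_j}{g_{i_{j+1}} = y_{i_{j+1}}}$. If $\cA_{j+1} \subsetneq \cA_j$, append $i_{j+1}$ to $C$; otherwise leave $C$ unchanged. Terminate as soon as $|\cA_j| \leq 1$. Because each appended coordinate strictly reduces the affine dimension of $\cA_j$, the output $C$ is automatically a minimal certificate of length at most $r$.

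The crux of the argument is a simple counting inequality. Fix $f \in \cL$ and set $A_f := \setdef{i \in [n]}{f_i = y_i}$, so $|A_f| \geq (1-\rho) n$. Given a state $\cA_j \ni f$, define $U_j := \setdef{i \in [n]}{\exists\, g \in \cA_j \text{ with } g_i \neq y_i}$, the coordinates on which the constraint $z_i = y_i$ is not already forced by membership in $\cA_j$. For any non-zero $h \in \cA_j - f$, the vector $h$ is a non-zero codeword of $\cC$ (since $\cA \subseteq \cC$ and $f \in \cA$), so has Hamming weight at least $\delta n$; taking the union over such $h$ gives $|U_j| \geq \delta n$. Inclusion--exclusion then yields
\[
|A_f \cap U_j| \;\geq\; |A_f| + |U_j| - n \;\geq\; (1-\rho) n + \delta n - n \;=\; \epsilon n.
\]

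The argument is completed by a conditional-probability estimate along the useful steps of the process. Conditional on reaching state $\cA_j$ with $f \in \cA_j$ and picking a useful coordinate $i_{j+1} \in U_j$, the probability that it additionally lies in $A_f$ (so that $f$ survives into $\cA_{j+1}$) equals $|A_f \cap U_j|/|U_j| \geq \epsilon n / n = \epsilon$. Since each useful coordinate strictly reduces $\dim \cA_j$, the process has at most $r$ useful steps, and iterating the conditional bound shows that the probability $f$ survives all of them---equivalently, that $C$ is a certificate for $f$---is at least $\epsilon^r$. The main conceptual content is the counting bound $|A_f \cap U_j| \geq \epsilon n$, which combines the minimum distance of $\cC$ (for $|U_j|$) with the agreement of $f$ with $y$ (for $|A_f|$); once this is in hand, the probabilistic bookkeeping is routine.
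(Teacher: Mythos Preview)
Your proof is correct and follows essentially the same approach as the paper: both define a random process that restricts the ambient affine space one coordinate at a time and show that, at each dimension-reducing step, $f$ survives with conditional probability at least $\epsilon$, yielding the $\epsilon^r$ bound after at most $r$ such steps. The only difference is cosmetic---you discard non-useful coordinates and condition on usefulness, whereas the paper appends every sampled coordinate and directly bounds the probability that each step simultaneously keeps $f$ and drops the dimension; the underlying counting inequality (at least $\epsilon n$ coordinates where $f$ agrees with $y$ but some other element of the current space does not) is identical.
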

\begin{proof}
  Let $\cA$ be the affine subspace of dimension $r$ containing $\cL$.
  The distribution on minimal certificates is the most natural one --- start with $C_0 = \emptyset$ and extend it by choosing a uniformly random coordinate, one coordinate at a time, until it becomes a minimal certificate. Fix any $f \in \cL$ for the rest of the argument. The goal is to show that the probability mass on short certificates for $f$ is large.
  \noindent
  For a set of coordinates $S = \set{i_1,\ldots, i_t}$, we will define $\cA(S)$ as 
  \[
  \cA(S) : = \setdef{f\in \cA}{\text{$f$ agrees with $y$ at coordinate $i$, for all $i \in S$}}\;.
  \]
  To begin with, $\cA^{(0)} := \cA$ contains $f$. Assume that we have constructed a partial certificate $C_j = (i_1,\ldots, i_j)$ so far with $\cA^{(j)} := \cA(C_j)$ being an affine space containing $f$. 
  Whenever we have $\cA^{(j)} = \cA(C_j) \neq \{f\}$ (i.e., $C_j$ is not yet a certificate for $f$), let $f'\neq f$ be any other element of $\cA^{(j)}$. Since $f'$ and $f$ are distinct codewords, they agree on at most $(1-\delta)n$ coordinates but $f$ agrees with $y$ on more than $(1-\delta + \epsilon)n$ coordinates. Hence, if $i_{j+1}$ was chosen to be any of the coordinates where that $f$ agrees with $y$ but disagrees with $f'$ on that coordinate, then we have that $\cA^{(j+1)} := \cA(\set{i_1,\dots, i_{j+1}})$ continues to contain $f$ but is a strictly smaller subspace of $\cA^{(j)}$. Thus, with probability at least $\epsilon$ on the choice of $i_{j+1} \in [n]$, we have that for $C_{j+1} = (i_1,\ldots, i_{j+1})$
  \begin{align*}
    f\in \cA^{(j+1)} \text{ and }\dim\cA^{(j+1)} < \dim \cA^{(j)}. 
  \end{align*}
  where $\cA^{(j+1)} = \cA(C_{j+1})$. 
  Since $\dim\cA^{(0)} \leq r$, with probability at least $\epsilon^r$ we get a minimal certificate for $f$ of length at most $r$. \qedhere

\end{proof}

% The total number of potential certificates of length $r$ is $n^{r}$, therefore the number of codewords is at most $(1 / \varepsilon)^{r}$.
% Combined with the result of Guruswami Wang, this shows a list size upper bound of $(1 / \varepsilon)^{O(1/\varepsilon)}$.

\section{Dimension of typical subspaces obtained from restrictions}\label{sec:GK}

In the above proof, we started with an $r$-dimensional space $\cA$ that included all our codewords of interest, and we considered various subspaces $\cA_i$ defined as
\[
\cA_i := \setdef{f\in \mathcal{A}}{f\text{ agrees with $y$ at coordinate $i$}}
\]
and let $r_i := \dim \cA_i$. In the above proof, we mainly used the fact that $r_i < r$ for at least $\epsilon n$ many choices of $i$. The following lemma of Guruswami and Kopparty~\cite{GuruswamiK2016} says that, for FRS codes, the average $r_i$ is significantly smaller than $r$. 

\begin{restatable}[Guruswami and Kopparty~\cite{GuruswamiK2016}]{lemma}{guruswamikopparty}
  \label{thm:kopparty-guruswami}
  Let $y \in \br{\bF_q^s}^n$ and $\cA$ be an affine subspace of $\FRS_{k,s}$ of dimension $r$. For each $i \in [n]$, define
  \[
  \cA_i =  \setdef{f\in \mathcal{A}}{f\text{ agrees with $y$ at coordinate $i$}}
  \]
  with $r_i = \dim \cA_i$. Then, 
  \[
  \sum_{i\in [n]} r_i \leq r\cdot \tau_r \cdot n
  \]
  for $\tau_r = \frac{s R}{s - r + 1}$ where $R = k/ns$. 
\end{restatable}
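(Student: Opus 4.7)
The plan is to construct a single nonzero polynomial $F(x) \in \bF_q[x]$ whose total order of vanishing lower-bounds $(s-r+1)\sum_i r_i$, and whose degree is upper-bounded by roughly $rk$. Writing $\cA = f_0 + V$ for a linear subspace $V \subseteq \bF_q[x]^{<k}$ of dimension $r$, one has $r_i \leq \dim V_i$ where $V_i := \{f \in V : f(\gamma^\ell \alpha_i) = 0 \text{ for all } \ell = 0, \ldots, s-1\}$, so it suffices to upper-bound $\sum_i \dim V_i$. I would fix a basis $f_1, \ldots, f_r$ of $V$ whose degrees are strictly increasing, $d_1 < d_2 < \cdots < d_r \leq k-1$ (achievable by Gaussian elimination against the monomial basis), and define
\[ F(x) := \det\bigl[f_j(\gamma^{\ell-1} x)\bigr]_{j, \ell = 1}^r . \]

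To see $F \not\equiv 0$, I would extract the coefficient of $x^{\sum_j d_j}$ in $F(x)$, which factors as $\bigl(\prod_j a_{j, d_j}\bigr) \cdot \det\bigl[\gamma^{(\ell-1) d_j}\bigr]_{j, \ell}$, where $a_{j, d_j}$ is the nonzero leading coefficient of $f_j$. The second factor is a Vandermonde determinant at the distinct points $\gamma^{d_1}, \ldots, \gamma^{d_r}$, which is nonzero since $\gamma$ generates $\bF_q^*$ and the $d_j$ lie in $\{0, \ldots, q-2\}$. Hence $F \not\equiv 0$ and $\deg F \leq r(k-1)$. The key structural claim is that $F$ vanishes to order at least $r_i$ at each of the $s-r+1$ points $\gamma^c \alpha_i$ for $c \in \{0, 1, \ldots, s-r\}$. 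To establish this, fix $i$ and re-choose the basis of $V$ so that $f_{r-r_i+1}, \ldots, f_r$ span $V_i$; this alters $F$ only by a nonzero scalar, preserving the order of vanishing. For each such $f_j$ and $\ell \in \{1, \ldots, r\}$, the entry $f_j(\gamma^{\ell-1} x)$ evaluated at $x = \gamma^c \alpha_i$ is $f_j(\gamma^{\ell-1+c} \alpha_i)$, which vanishes because $\ell - 1 + c \in \{0, \ldots, s-1\}$ whenever $c \leq s - r$. Each entry of the last $r_i$ rows is therefore divisible by $(x - \gamma^c \alpha_i)$, and pulling this factor out row by row shows $(x - \gamma^c \alpha_i)^{r_i} \mid F(x)$.

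Under the standard FRS hypothesis that the $ns$ points $\{\gamma^\ell \alpha_i\}_{\ell, i}$ are distinct, the $(s-r+1)n$ points $\{\gamma^c \alpha_i : 0 \leq c \leq s-r, \, i \in [n]\}$ are likewise distinct, so their linear factors are pairwise coprime. Thus $\prod_{i,c}(x - \gamma^c \alpha_i)^{r_i}$ divides $F(x)$, which gives $(s-r+1)\sum_i r_i \leq \deg F \leq r(k-1) < rk$, and hence $\sum_i r_i < rk/(s-r+1) = r \tau_r n$. The subtlest part of this plan is verifying $F \not\equiv 0$: a naive basis choice, or too small a multiplicative order for $\gamma$, can collapse the determinant identically, so both the distinct-degree basis and the Vandermonde leading-coefficient argument are crucial to the plan.
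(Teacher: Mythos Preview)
Your proposal is correct and takes essentially the same approach as the paper: both define the folded Wronskian $F(x)=\det[f_j(\gamma^{\ell-1}x)]$, bound its degree by $rk$, and show it vanishes to order at least $r_i$ at each of the $s-r+1$ points $\gamma^c\alpha_i$ ($0\le c\le s-r$), then sum multiplicities. You are more explicit than the paper in two places---you prove $F\not\equiv 0$ via a distinct-degree basis and a Vandermonde leading-coefficient computation (the paper simply asserts this is ``known''), and you justify the order-$r_i$ vanishing by an explicit change of basis making the last $r_i$ rows vanish (the paper instead says rank deficiency $r_i$ of the $s\times r$ evaluation matrix forces each $r\times r$ sub-determinant to vanish to order $r_i$, which is the same row-factoring argument stated more tersely).
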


\noindent
In other words, $\E_i[r_i] \approx r \cdot R$. More precisely, if $s = \Theta(1/\epsilon^2)$ and $r = \Theta(1/\epsilon)$, then $\tau_r = R \cdot \br{1 + \Theta(\epsilon)}$.  \\

For the sake of completeness, we add a proof of the above lemma in \cref{sec:proof-of-guruswami-kopparty}. 
Using this lemma, we can obtain significantly better bounds on the list size for FRS codes. 

\section{Srivastava's improved list size bound}\label{sec:Sri}

Srivastava's~\cite{Srivastava2025} main theorem is the following.
\begin{theorem}[Better list size bounds for FRS codes \cite{Srivastava2025}]
  \label{thm:shashank-list-size-bound}
  If $y$ is any received word, and $\cA$ is an affine subspace of dimension $r$, then for any $r \leq t \leq s$ we have $$\abs{B\br{y, \frac{t}{t+1}\br{1-\frac{s}{s-r+1}R}} \cap \cA} \leq (t-1)r + 1.$$
  \noindent Writing in terms of $\tau_r = \frac{sR}{s - r + 1}$ (as in \cref{thm:kopparty-guruswami}), the above can be written as
  \[
    \abs{B\br{y, \frac{t}{t+1}\br{1-\tau_r}} \cap \cA} \leq (t-1)r + 1\;.
  \]
\end{theorem}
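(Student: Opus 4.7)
The plan is to induct on $r$, exploiting \cref{thm:kopparty-guruswami} together with the agreement graph framework of \cref{defn:agreement-graph}. The base case $r = 0$ is immediate: a zero-dimensional affine subspace contains a single codeword, giving $|\cL| \le 1 = (t-1)\cdot 0 + 1$.

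For the inductive step with $r \ge 1$, I would assume towards a contradiction that $m := |\cL| \ge (t-1)r + 2$ and derive a contradiction by edge-counting in the agreement graph $G(\cL, y)$. For each coordinate $i \in [n]$, let $\cA_i := \setdef{f \in \cA}{f \text{ agrees with } y \text{ at } i}$, $r_i := \dim \cA_i$, and $d_i := |\cL \cap \cA_i|$. Partition $[n] = U \sqcup V$ with $U := \set{i : \cA_i = \cA}$ (so $r_i = r$ and $d_i = m$) and $V := \set{i : r_i < r}$, and write $u := |U|$. For each $i \in V$, I would apply the inductive hypothesis to $\cL \cap \cA_i \subseteq \cA_i$: the radius condition $\rho \le \frac{t}{t+1}(1-\tau_{r_i})$ is preserved because $\tau_r$ is monotone non-decreasing in $r$, yielding $d_i \le (t-1)r_i + 1$.

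Next, count edges in the agreement graph two ways. From the left, the open-ball definition gives the \emph{strict} bound $\sum_i d_i > m(1-\rho)n$. From the right, combining the inductive bound on $V$ with \cref{thm:kopparty-guruswami} --- which gives $\sum_{i \in V} r_i \le r\tau_r n - r u = r(\tau_r n - u)$ --- yields
$$\sum_i d_i \;\le\; m u + (t-1)\,r\,(\tau_r n - u) + (n-u) \;=\; u\,\bigl(m-(t-1)r-1\bigr) + n\,\bigl((t-1)r\tau_r+1\bigr).$$
Since the coefficient $m - (t-1)r - 1 \ge 1$ is strictly positive, rearranging the two bounds on $\sum_i d_i$ produces a strict lower bound $u > L$ for an explicit quantity $L$ depending on $m, r, t, \rho, \tau_r$.

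The closing step is a short algebraic verification, using $\rho = \frac{t}{t+1}(1-\tau_r)$ (equivalently $1 - \rho - \tau_r = \frac{1-\tau_r}{t+1}$), that $L \ge \tau_r n$ holds whenever $m \ge t+1$, which is automatic since $(t-1)r + 2 \ge t+1$ for $r \ge 1$. Combined with the companion consequence $u \le \tau_r n$ of \cref{thm:kopparty-guruswami} (from $r u = \sum_{i \in U} r_i \le \sum_i r_i \le r\tau_r n$), this gives $u > L \ge \tau_r n \ge u$, the desired contradiction. The main subtlety to watch for is the boundary case $r = 1$, where $L = \tau_r n$ holds with equality: there the argument closes only because the strict inequality coming from the open ball is carried through, so preserving strictness from beginning to end is essential.
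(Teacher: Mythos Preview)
Your proposal is correct and follows essentially the same approach as the paper: induction on $r$, edge-counting in the agreement graph, and the Guruswami--Kopparty bound to control $\sum_i r_i$. The main organisational differences are that you take $r=0$ as the (trivial) base case and absorb $r=1$ into the inductive step, whereas the paper proves $r=1$ as a standalone lemma and starts the induction at $r=2$; and you frame the inductive step as a contradiction (deriving $u > \tau_r n \ge u$ from $m \ge (t-1)r+2$), whereas the paper bounds $L$ directly by a ratio in which the $|\cB|$-terms cancel. Your use of $u \le \tau_r n$ straight from \cref{thm:kopparty-guruswami} is a mild streamlining over the paper's separate appeal to the code distance for $|\cB| \le Rn$.
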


\paragraph{Setting parameters:} For any parameter $\epsilon > 0$, we can set $t = 2/\epsilon$, and $s = 3/\epsilon^2$. By Guruswami and Wang~\cite{GuruswamiW2013}, we know that 
\[
B\br{y, \frac{t}{t+1}(1 - \tau_{t})} \cap \FRS_{k,s}
\]
is contained in an affine subspace of dimension at most $r = t-1$. Plugging these parameters in, we can check that $\rho = \frac{t}{t+1}(1 - \tau_t) \geq 1 - R - \epsilon$. 
\begin{align*}
  \rho  & = \frac{t}{t+1} \cdot \br {1 - \frac{sR}{s - t + 2}}\\
  & = \frac{(2/\epsilon)}{(2/\epsilon + 1)}\br {1 - R \cdot \frac{3/\epsilon^2}{3/\epsilon^2  - 1/\epsilon + 2}}\\
  & = \frac{1}{(1 + \frac{\epsilon}{2})}\br {1 - R \cdot \frac{1}{1 - (\epsilon/3) + (2/3)\epsilon^2}}\\
  & = (1 - \frac{\epsilon}{2} \pm \Theta(\epsilon^2)) \cdot \br{1 - R\br{1 + \frac{\epsilon}{3} \pm \Theta(\epsilon^2)}}\\
  & = 1 - R - \epsilon\br{\frac{1}{2} + \frac{R}{3}} \pm \Theta(\epsilon^2)\\
  & \geq 1 - R - \epsilon\;.
\end{align*}
In that case, we get that $\FRS_{k,s}$ codes are $(1-R-\epsilon, O(1/\epsilon^2))$-list-decodable. \\

\subsection{Proof of \cref{thm:shashank-list-size-bound}}

The above theorem is proved by induction on the dimension $r$.
The base case is when $r = 1$.
The following bound holds for any linear code.

\begin{lemma}(\cref{thm:shashank-list-size-bound} for $r = 1$)
  If $y$ is any received word, and $\cA$ is an affine subspace of dimension $1$, then for any $t \geq 1$ we have $$\abs{B\br{y, \frac{t}{t+1}(1-R)} \cap \cA} \leq t\;.$$
\end{lemma}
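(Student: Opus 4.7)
The plan is to argue by contradiction via a double-counting argument on the agreement graph $G(\{g_1,\dots,g_{t+1}\},y)$. Suppose for contradiction that $t+1$ distinct codewords $g_1, \dots, g_{t+1} \in \cA$ all lie in $B\bigl(y, \tfrac{t}{t+1}(1-R)\bigr)$. Parameterize the affine line as $\cA = \{g_0 + \lambda h : \lambda \in \bF\}$ for some fixed nonzero codeword $h$ in the underlying linear code; since the code has relative distance $1-R$, the codeword $h$ has weight at least $(1-R)n$.

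The key structural observation is a dichotomy at each coordinate $i \in [n]$. Writing $h_i \in \bF^s$ for the $i$-th block of $h$, the values of the codewords on $\cA$ at coordinate $i$ trace out the affine map $\lambda \mapsto (g_0)_i + \lambda h_i$; either $h_i = 0$, in which case all codewords on $\cA$ agree at $i$ (call $i$ \emph{constant}), or $h_i \neq 0$, in which case all codewords on $\cA$ take pairwise distinct values at $i$ (call $i$ \emph{injective}). Letting $Z \subseteq [n]$ denote the set of constant coordinates, the weight bound on $h$ gives $|Z| \leq Rn$.

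Next I would count the total number of (codeword, coordinate) agreements with $y$ in two ways. Coordinate-wise: each $i \in Z$ contributes at most $t+1$ agreements and each $i \notin Z$ contributes at most one agreement (since at most one value on $\cA$ can match $y_i$), so the total is at most $(t+1)|Z| + (n - |Z|) = n + t|Z| \leq n(1+tR)$. Codeword-wise: each $g_j$ lies in the open ball and hence agrees with $y$ on strictly more than $\bigl(1 - \tfrac{t}{t+1}(1-R)\bigr)n$ coordinates, so the total is strictly greater than $(t+1)\bigl(1 - \tfrac{t}{t+1}(1-R)\bigr)n = n(1+tR)$, which contradicts the coordinate-wise bound.

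The only real content is the coordinate-wise dichotomy on a one-dimensional affine subspace of a linear code; once this is in place, the two counts meet exactly and the strict inequality in the definition of the Hamming ball closes the gap. I do not anticipate any genuine obstacle, although it is worth noting that the exact shape $\tfrac{t}{t+1}(1-R)$ of the radius in the statement is precisely the threshold at which the two bounds coincide, which is why this precise form appears.
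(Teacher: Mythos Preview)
Your argument is correct and is essentially the same as the paper's: both parameterize the affine line, split coordinates according to whether the direction vector vanishes there, and double-count agreements in the agreement graph. The only cosmetic difference is that the paper works with the support $S$ of the direction vector (your injective coordinates) and bounds the list size $L$ directly, whereas you work with the complement $Z$ and phrase it as a contradiction at $L=t+1$; the arithmetic is identical.
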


% \begin{proof}
%   Suppose the affine space $\cA$ is of form $\setdef{f_0 + \sigma f_1}{\sigma \in \F_q}$, with $f_1 \neq 0$. Let us use $S : = \setdef{i \in [n]}{(\FRS_{k,s}(f_1))_i \neq 0}$ to denote the support of the encoding of $f_1$. Note that $\abs{S} \geq (1 - R)\cdot n$. 
  
%   Assume on the contrary that we have $t+1$ distinct codewords in the list. That is,
%   \[
%   \set{g_1,\dots, g_{t+1}} \subseteq B\br{y, \frac{t}{t+1}(1-R)} \cap \cA\;.
%   \]
%   For each $i \in [t+1]$, define the set $S_i \subseteq [n]$ to be the set of agreements that $g_i$ has with the received word $y$. 

%   Note that two different codewords in $\cA$ must completely disagree on the coordinates in $S$. Hence, we must have that $S_i \cap S$ are disjoint for $i \in [t+1]$. This means that there must be at least some $i \in [t+1]$ such that $\abs{S_i \cap S} \leq \frac{|S|}{t+1}$ which implies that this $g_i$ disagrees with $y$ on all other coordinates in $S$. Therefore, $g_i$ disagrees with $y$ on at least $|S| \cdot \frac{t}{t+1} \geq \frac{t}{t+1} (1 - R)n$ coordinates, contradicting the required distance bound. Hence, there must be at most $t$ codewords in the list. 
% \end{proof}

% \phnote{This is a slightly different presentation of the above proof due to Ashutosh. I prefer this to the one above.}
\begin{proof}

   Let $L = \abs{B\br{y, \frac{t}{t+1}(1-R)} \cap \cA}$ 

   Suppose the affine space $\cA$ is of form $\setdef{f_0 + \sigma f_1}{\sigma \in \F_q}$, with $f_1 \neq 0$. Let us use $S : = \setdef{i \in [n]}{(\FRS_{k,s}(f_1))_i \neq 0}$ to denote the support of the encoding of $f_1$. Note that $\abs{S} \geq (1 - R)\cdot n$. 

   Notice also that two distinct codewords in the list have to disagree completely on $S$. Hence, every right-side vertex in $S$ has at most one outgoing edge. 

   We now count edges in the agreement graph, from both sides. From the codewords side, since each codeword has agreement strictly more than $n(1-\frac{t}{t+1}(1-R))$, the number of edges is more than $L n(1-\frac{t}{t+1}(1-R))$.

   From the locations side, each vertex in $S$ contributes at most one edge. Each vertex outside $S$ may contribute up to $L$ edges. This gives the total number of edges to be at most $|S| + L(n-|S|) = Ln - (L - 1)|S| \leq Ln - (L - 1)(1-R)n $ using the above lower bound on $|S|$. 
  
   Combining the upper and lower bound on the number of edges,
   \[  L n\inparen{1-\frac{t}{t+1}(1-R)} < Ln - (L - 1)(1-R)n  \]
   Rearranging and cancelling out $Ln$ on both sides,
   \[ (L-1)(1-R)n < L \frac{t}{t+1} (1-R)n \]
 Solving for $L$ gives $L < t+1$.
\end{proof}

We now prove the main theorem.

\begin{proof}[Proof of \cref{thm:shashank-list-size-bound}]
  Let $\rho := \frac{t}{t+1}\br{1-\tau_r}$ and we wish to bound the size of $B(y,\rho) \cap \cA$. 
  \[
  L(r) := \max_{\cA\;:\;\dim\cA = r} \abs{B(y,\rho) \cap \cA}.
  \]
  We will prove a bound on $L(r)$ by inducting on $r$. 
  \begin{quote}
    Inductive claim: $L_i \leq L(r_i) \leq \sigma \cdot r_i + 1$ for a constant $\sigma$ independent of $r_i$.
  \end{quote}
  We will eventually show $\sigma = t-1$ would be sufficient, giving us the requisite bound. \\

  For each $i$, let $\cA_{i}$ be the subspace of $\cA$ corresponding to agreement at coordinate $i$ with $y$.
  Let $r_{i} := \dim \cA_{i}$.
  Let $L$ be the number of codewords in $B(y, \rho) \cap \cA$, and let $L_{i}$ be the number of codewords in $B(y, \rho) \cap \cA_{i}$.
  By the induction hypothesis, for every $i$ such that $r_{i} < r$, we have  $L_{i} \leq L(r_i) \leq \sigma r_{i} + 1$. 

  We count the number of edges in the agreement graph.
  Counting from the left, each codeword has agreement at least $(1 - \rho)n$, therefore the number of edges is at least $(1 - \rho)nL$.
  Counting from the right, coordinate $i$ is incident to at most $L_{i}$ codewords, therefore the number of edges is at most $\sum_{i} L_{i}$.
  Combining this, we have the inequality $\sum_{i} L_{i} \geq (1 - \rho)nL$.

  We cannot use induction to control the coordinates where $r_{i} = r$, therefore for these coordinates we use the trivial bound $L_{i} \leq L$.
  Let $\cB$ be the set of coordinates for which this is true.
  We therefore have
  $$
  \sum_{i \not \in \cB} \br{\sigma \cdot r_{i} + 1} \geq L\br{(1 - \rho)n - \abs{\cB}}\;.
  $$
  Every codeword in the list agrees with $y$ on the set $\cB$, therefore in particular the codewords agree with each other on this set.
  Since any two codewords can have agreement at most $Rn$, we have $\abs{\cB} \leq Rn$, which implies the term $((1 - \rho)n - \abs{\cB})$ is positive.
  Therefore, we can deduce 
  $$
  L \leq \frac{\sum_{i \not \in \cB} \br{\sigma \cdot r_{i} + 1}}{(1 - \rho)n - \abs{\cB}}\;.
  $$

  By \cref{thm:kopparty-guruswami}, we have
  \begin{align*}
    \sum_{i\in [n]} r_i  & = \sum_{i\notin \cB} r_i + \abs{\cB}\cdot r \leq rn \tau_r\;\\
    \implies \sum_{i \notin \cB} (\sigma \cdot r_i + 1) & \leq \sigma \cdot rn \tau_r - \sigma \cdot r \abs{\cB} + (n - \abs{\cB})\\
    & = \sigma \cdot rn \tau_r + n - \abs{\cB}(\sigma \cdot r + 1)\;.\\
    \implies L & \leq \frac{\sigma rn \tau_r + n - \abs{\cB}(\sigma r + 1)}{(1 - \rho)n - \abs{\cB}}\;.
  \end{align*}

  To complete the induction, we have to show $L \leq \sigma \cdot r + 1$.
  From the above, it suffices to show
  \begin{align*}
  0 &\leq \br{(1 - \rho)n - \abs{\cB}} \cdot \br{\sigma r + 1} - \br{\sigma rn \tau_r + n - \abs{\cB}(\sigma r + 1)}\\
  & = (1 - \rho)n \cdot \br{\sigma r + 1} - \br{\sigma rn \tau_r + n}\;.
  \end{align*}
  Indeed, using the fact that $\rho = \frac{t}{t+1} \cdot (1 - \tau_r)$, we have
  \begin{align*}
    (1 - \rho) \cdot \br{\sigma r + 1} - \br{\sigma r \cdot \tau_r + 1} & = \sigma r \cdot ((1-\rho) - \tau_r) + (1 - \rho - 1)\\
    & = \sigma r \cdot ((1-\tau_r) - \rho) - \rho\\
    & = \sigma r \cdot \rho \cdot \inparen{\frac{t+1}{t} - 1} - \rho\\
    & = \rho \cdot \inparen{\frac{\sigma r}{t} - 1} \geq 0 \quad \text{for $\sigma = (t-1)$}
  \end{align*}
  since $(t-1)r \geq t$ as $t > r \geq 2$. 
\end{proof}

From the above proof, it feels like we could have perhaps taken $\sigma = \frac{t}{r}$, thereby getting a list size bound of $L(r) \leq \sigma r + 1 \leq t+1$ instead of $O(tr)$. However, note that the above proof used the fact that $\sigma$ was independent of $r$ (when we bounded $\sum_{r_i < r} L(r_i)$ with $\sigma \sum r_i + (n - \abs{\cB})$). Nevertheless, this perhaps suggests that there is some slack in the above analysis and one could perhaps improve the analysis to obtain a list-size bound of $O(t)$ instead of $O(tr)$. 

Chen and Zhang~\cite{ChenZ2025} (independent and parallel to Srivastava~\cite{Srivastava2025}) obtain an $O(t)$ bound by using induction to bound the number of edges of the agreement graph rather than bounding the list size directly. 

\section{Further improvements on the list size due to Chen and Zhang}\label{sec:CZ}

\begin{theorem}[Chen and Zhang~\cite{ChenZ2025}]
  \label{thm:chen-zhang}
  Let $y \in \br{\bF_q^s}^n$ be an arbitrary received word for the $\FRS_{k,s}$ code. For any $0\leq t \leq s$, we have
  \[
  \abs{B\br{y, \frac{t}{t+1} \br{1 - \tau_t}} \cap \FRS_{k,s}} \leq t\;,
  \]
  where $\tau_t = \frac{s R}{s - t + 1}$ (as in \cref{thm:kopparty-guruswami}). 
\end{theorem}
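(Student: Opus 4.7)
Proof plan: Following the hint at the end of \cref{sec:Sri}, my strategy is to sharpen Srivastava's edge-counting argument by inducting on the \emph{total number of edges} in the agreement graph rather than directly on the list size. Srivastava's analysis yields $L \leq (t-1)r + 1$, which carries an extraneous factor of $r$; the driving observation is that the Guruswami--Kopparty estimate $\sum_i r_i \leq r \tau_r n$ is tight in $r$, so an edge-level inductive invariant should be able to absorb this savings and drive the bound down to the advertised $L \leq t$.

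Concretely, I would attempt to prove, by induction on $t$, a strengthened claim of the form: for every received word $y$, every $t \in \{1,\dots,s\}$, and every affine subspace $\cA \subseteq \FRS_{k,s}$ of dimension $r \leq t$, the agreement graph of $B(y,\rho_t) \cap \cA$ has strictly fewer than $(t+1)(1-\rho_t)n$ edges, where $\rho_t := \frac{t}{t+1}(1 - \tau_t)$. Paired with the left-side lower bound $E > L(1-\rho_t)n$, this would immediately force $L < t+1$ and hence $L \leq t$. The base case $t=1$ reduces to the classical half-the-distance bound: $\rho_1 = \frac{1-R}{2}$ is strictly less than half the minimum distance, so at most one codeword lies in the ball; a single codeword contributes at most $n$ edges, which is comfortably within the allowed $(1+R)n$.

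For the inductive step, I would follow the skeleton of the proof of \cref{thm:shashank-list-size-bound}: partition $[n]$ into a \emph{bad} set $\cB := \{i : r_i = r\}$ and a \emph{good} set $[n]\setminus \cB$. As in Srivastava, $|\cB| \leq Rn$ because every pair of codewords on the list must pairwise agree on all of $\cB$, and on the good coordinates I would invoke the inductive hypothesis applied to each $\cA_i$ (of strictly smaller dimension $r_i < r$) to bound its contribution $L_i$ to $\sum_i L_i$. Aggregating these contributions using $\sum_i r_i \leq r\tau_r n$, together with the trivial bound $L_i \leq L$ on $\cB$, should yield the desired overall edge bound; the final arithmetic ought to mirror the closing calculation of \cref{thm:shashank-list-size-bound}, with $\rho_t = \frac{t}{t+1}(1-\tau_t)$ being precisely the choice that balances the two sides of the edge count at $L = t$.

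The main obstacle, and the place where genuine creativity is required, is the parameter bookkeeping. Because $\rho_t$ depends nontrivially on $t$, applying the inductive hypothesis at a reduced parameter $t' < t$ on a smaller subspace $\cA_i$ does not automatically deliver a bound on $B(y, \rho_t) \cap \cA_i$ at the \emph{original} radius. One must either establish the appropriate monotonicity between balls of radii $\rho_t$ and $\rho_{t'}$, or formulate the invariant flexibly enough (perhaps as a coordinate-weighted edge count whose weights depend on $r_i$, or as a bound on a convex combination $\sum_i \psi(L_i)$ rather than a simple sum) that the induction still closes at radius $\rho_t$. Identifying the correct inductive invariant is, in my view, essentially the entire content of the Chen--Zhang improvement; once it is in place, the closing step should reduce to a short algebraic identity analogous to Srivastava's.
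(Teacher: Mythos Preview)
Your high-level instinct is right: the Chen--Zhang improvement does come from bounding the number of edges in the agreement graph rather than the list size itself. But your concrete plan---partition the \emph{coordinates} into good/bad as in Srivastava and induct on the parameter $t$---is not the paper's route, and the obstacle you yourself flagged (the radius $\rho_t$ shifts with $t$, so the inductive hypothesis lives at a different ball) is exactly why that route is awkward. You correctly diagnose that ``identifying the right invariant is essentially the entire content,'' but you have not found it, so the proposal as written has a genuine gap.

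Here is what the paper actually does. The invariant is parameter-free: for any list $\{f_1,\dots,f_m\}$ spanning an affine space of dimension $r$, the agreement graph $G$ satisfies
\[
E_G \;\leq\; (m-1)\,n\,\tau_r \;+\; n_G,
\]
where $n_G$ is the number of right vertices of positive degree. The induction is on $m$, not on $t$ or $r$, and the decomposition is of the \emph{codewords}, not of the coordinates. One picks $r+1$ affinely independent codewords $f^{(0)},\dots,f^{(r)}$ from the list, builds the flag $\cA^{(0)}\subset\cdots\subset\cA^{(r)}=\cA$, and partitions the list into the slices $H_i := (\cA^{(i)}\setminus\cA^{(i-1)})\cap\{f_1,\dots,f_m\}$; each $H_i$ is nonempty and strictly smaller than $m$, so the inductive hypothesis applies to every piece. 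Summing gives $E_G \leq (m-1)n\tau_r - r\,n\,\tau_r + \sum_i n_{H_i}$, and the crux is the identity $\sum_i n_{H_i} - n_G = \sum_j (t_j-1)$, where $t_j$ counts how many of the parts $H_i$ coordinate $j$ touches. Because one codeword from each of $t_j$ distinct slices of the flag is automatically affinely independent, $t_j-1\leq r_j$, and then Guruswami--Kopparty closes the books: $\sum_j (t_j-1)\leq\sum_j r_j\leq r\,n\,\tau_r$. The theorem follows in one line by taking $m=t+1$ and comparing the resulting upper bound $(t\tau_t+1)n$ on $E_G$ with the left-side lower bound $(t+1)(1-\rho)n$.

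So the missing idea is the flag-based partition of the list together with the affine-independence observation $t_j-1\leq r_j$; this is what lets Guruswami--Kopparty be spent once globally rather than coordinate-by-coordinate, and it is what makes the invariant independent of $t$ and hence immune to the monotonicity issue you were worried about.
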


\paragraph{Setting parameters:} As in the previous case, if $t = 2/\epsilon$ and $s = 3/\epsilon^2$ we once again have $\rho = \frac{t}{t+1}\br{1 - \tau_t}\geq 1 - R - \epsilon$, the above theorem shows that $\FRS_{k,s}$ codes are 
$(1 - R - \epsilon, 2/\epsilon)$-list-decodable. 

\medskip

\begin{remark*}
Unlike the previous bound of Srivastava (\cref{thm:shashank-list-size-bound}), the above bound is oblivious of any ambient space that the codewords lie in. In particular, the above list-size bound does not rely on the fact from Guruswami and Wang~\cite{GuruswamiW2013} that all close-enough codewords lie in a low-dimensional affine space. 
\end{remark*}

The above theorem will be proved by once again considering relevant agreement graphs and upper-bounding the number of edges in it. For a set of distinct codewords $\set{f_1,\ldots, f_m}$ and a received word $y \in \br{\bF_q^s}^n$, let $G = G(\set{f_1,\ldots, f_m},y)$ be the agreement  graph. We will use $E_G$ to denote the number of edges in $G$. For any subset $H$ of left vertices in $G$, let $E_H$ denote the number of edges in the induced graph $G(H, y)$. 

Let $n_G$ be the number of right vertices of $G$ that have degree at least $1$ (these are the positions where at least one of the codewords agrees with $y$). Similarly, for any subgraph induced by a set $H$ of left vertices, $n_{H}$ is the number of right vertices with degree at least $1$ (we overload notation and use $H$ for both the subset of vertices and the induced subgraph).

The main technical lemma of Chen and Zhang can be stated as follows. 

\begin{lemma}[Chen and Zhang~\cite{ChenZ2025}]
  \label{lem:chen-zhang}
  Let $\mathcal{A}$ be the affine subspace spanned by $f_1, \dots, f_m$ and suppose $r$ be the dimension of this affine space. Then, for any agreement graph $G = G(\set{f_1,\dots, f_m}, y)$ corresponding to a message $y \in \br{\bF_q^s}^n$, we have
  \[
  E_G \leq \frac{(m-1)k}{s - r + 1} + n_{G}\;.
  \]
\end{lemma}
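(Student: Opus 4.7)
The plan is to prove \cref{lem:chen-zhang} by induction on the number of codewords $m$, using \cref{thm:kopparty-guruswami} as the main technical input. The base case $m = 1$ is immediate: the affine span has dimension $r = 0$, every agreement is counted exactly once, so $E_G = n_G$. For the inductive step, I would peel off one codeword $f_{i_0}$ and compare $G$ with the agreement subgraph $G' = G(\{f_i\}_{i \neq i_0}, y)$ on the remaining $m-1$ codewords. Letting $r' \in \{r-1, r\}$ denote the affine dimension of this shorter list, a short bookkeeping calculation yields
\[
E_G - n_G \;=\; (E_{G'} - n_{G'}) + a_{i_0}, \qquad a_{i_0} \;:=\; \abs{\{j \in [n] : f_{i_0} \in M_j \text{ and } \abs{M_j} \geq 2\}}\,,
\]
so the inductive hypothesis applied to $G'$ reduces the task to controlling the ``joint-agreement count'' $a_{i_0}$. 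When $r' = r$ (possible whenever $r < m-1$) the required bound is $a_{i_0} \leq k/(s - r + 1)$; when $r' = r - 1$ it is the slightly larger $k(s-r+m)/[(s-r+1)(s-r+2)]$.

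The \emph{main obstacle} is getting a sharp bound on $a_{i_0}$. The crude pairwise FRS distance bound yields only $a_{i_0} \leq (m-1)k/s$, which is too weak for $m \geq 3$; a direct application of \cref{thm:kopparty-guruswami}, upper-bounding the number of coordinates $j$ with $r_j \geq 1$ by $\sum_j r_j$, gives $a_{i_0} \leq rk/(s-r+1)$, still off by a factor of $r$. Closing this factor-of-$r$ gap will likely require a careful extremal choice of $f_{i_0}$, guided by the affine-dependence structure of $\{f_1,\ldots,f_m\}$ inside $\cA$, combined with an averaging identity such as $\sum_{i_0 \in [m]} a_{i_0} = \sum_{j : d_j \geq 2} d_j$.

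A promising alternative, which may be cleaner than an inductive peel, is to aggregate all coincidence coordinates simultaneously. Observing that each $M_j$ is a flat of the affine matroid on $\{f_1,\ldots,f_m\}$ (i.e.\ $M_j$ is closed under taking those affine combinations that happen to lie in the codeword list), we can rewrite $E_G - n_G = \sum_F (\abs{F} - 1)\, n_F$, where $F$ ranges over flats of affine rank $r_F$ and $n_F = \abs{\{j : M_j = F\}}$. Applying \cref{thm:kopparty-guruswami} to the affine span of each flat $F$ produces a hierarchical family of constraints $\sum_{F' \supseteq F} n_{F'} \leq k/(s - r_F + 1)$, and the desired lemma should then follow from a short linear-programming duality or telescoping argument on this constraint system. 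The hardest step in either approach will be leveraging these per-flat bounds carefully enough to absorb the missing factor of $r$.
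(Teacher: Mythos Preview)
Your proposal correctly sets up the induction on $m$, the base case, and the bookkeeping identity $E_G - n_G = (E_{G'} - n_{G'}) + a_{i_0}$. But the proof is genuinely incomplete: you yourself note that the required bound on $a_{i_0}$ is off by a factor of $r$ from what either the pairwise-distance estimate or a direct appeal to \cref{thm:kopparty-guruswami} yields, and you do not close that gap. Neither the ``careful extremal choice of $f_{i_0}$'' nor the LP-on-flats alternative is actually carried out, and it is not clear either can be made to work within the one-codeword-at-a-time framework.

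The paper's resolution is to change the inductive decomposition. Rather than removing a single codeword, it fixes a full affine flag $\cA^{(0)} \subsetneq \cA^{(1)} \subsetneq \cdots \subsetneq \cA^{(r)} = \cA$ through the list and partitions $\{f_1,\ldots,f_m\}$ into the $r{+}1$ nonempty layers $H_i = (\cA^{(i)} \setminus \cA^{(i-1)}) \cap \{f_1,\ldots,f_m\}$. Applying the inductive hypothesis to each $H_i$ and summing gives
\[
E_G \;\leq\; (m-1)\,n\tau_r \;-\; r\,n\tau_r \;+\; \sum_i n_{H_i}\,,
\]
so a slack of $r\,n\tau_r$ appears for free, precisely because there are $r{+}1$ parts and $\sum_i (m_i - 1) = m - (r{+}1)$. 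It remains to show $\sum_i n_{H_i} - n_G \leq r\,n\tau_r$. Writing $t_j$ for the number of layers $H_i$ meeting coordinate $j$, this difference equals $\sum_j (t_j - 1)$. Here the flag structure pays off: choosing one codeword from each of $t_j$ distinct layers yields an affinely independent set, hence $t_j - 1 \leq r_j := \dim \cA_j$. Then $\sum_j (t_j - 1) \leq \sum_j r_j \leq r\,n\tau_r$ by \cref{thm:kopparty-guruswami}, and the factor of $r$ cancels exactly.

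In short, the missing idea is that peeling \emph{one} codeword buys only a single unit of slack per step, whereas partitioning along a flag of length $r$ buys $r$ units of slack in a single inductive step --- matched exactly by the $\sum_j r_j$ bound from \cref{thm:kopparty-guruswami}. Your flat-based alternative is groping toward this (flats of the affine matroid are the right objects), but the clean way to organise it is via the flag, not via an LP over all flats.
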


\noindent
Recalling the parameter $\tau_r$ from \cref{thm:kopparty-guruswami}, the above can be restated as saying
\[
E_G \leq (m-1) \cdot n \cdot \tau_r + n_G\;.
\]

\noindent 
Before we see the proof of the above lemma, let us see how \cref{lem:chen-zhang} implies \cref{thm:chen-zhang}. 

\begin{proof}[Proof of \cref{thm:chen-zhang}]
Assume on the contrary that there are $t+1$ distinct codewords $f_1,\ldots, f_{t+1}$ that with fractional distance less than $\rho$ from $y$, where $\rho = \frac{t}{t+1} (1 - \tau_t) = \frac{t}{t+1} - \frac{t}{t+1}\tau_t$. Consider the agreement graph $G = G(\set{f_1,\ldots, f_{t+1}}, y)$. By counting edges from the left, we have that 
\[
\abs{E_G} > (1 - \rho)n \cdot (t+1) = (t\tau_t + 1)n\;.
\]
On the other hand, note that any set of $t+1$ codewords is contained in an affine space of dimension $r \leq t$. Thus, using \cref{lem:chen-zhang} we have
\[
\abs{E_G} \leq (t+1 - 1) \cdot n \cdot \tau_t  + n_G \leq (t \tau_t + 1) \cdot n
\]
contradicting the above bound. Hence the size of the list must be at most $t$. 
\end{proof}

\subsection{Proof of \cref{lem:chen-zhang}}

  Recall that we have to prove that for $G = G(\set{f_1,\ldots, f_m}, y)$, the number of edges $\abs{E_G}$ is upper-bounded by 
  \[
  E_G \leq (m - 1) \cdot n \cdot \tau_r  + n_G\;.
  \]
  where $r$ is the dimension of the smallest affine space $\cA$ containing $f_1,\ldots, f_m$. \\

  \noindent
  The proof is by induction on $m$. The case of $m = 1$ is trivial, since $E_G = n_G$ when $m = 1$.

  % Suppose $m = 2$.
  % Since the code is a linear code with rate $k / ns$, the number of blocks on which the encoding of $f_{1}, f_{2}$ can agree is at most $k/s$.
  % Therefore, in $G$, at most $k/s$ vertices on the left have degree $2$, and the remaining vertices have degree at most $1$.
  % This immediately implies $E_{G} \leq k/s + n_{G}$.

  % \RPnote{we don't need to deal with this case separately}
  Now suppose $m \geq 2$. Hence $r \geq 1$.
  We partition the set of codewords as follows.
  Let $f^{(0)}, f^{(1)},\dots,f^{(r)}$ be $r+1$ codewords in the list $\{f_1,\ldots,f_m\}$ such that the smallest affine space generated by $\set{f^{(0)},\ldots,f^{(r)}}$ is $\cA$. For $i=0, \ldots, r$, let $\cA^{(i)}$ be the smallest affine space generated by $\set{f^{(0)},\ldots,f^{(i)}}$. Observe that the affine dimension of $\cA^{(i)}$ is $i$ and $f^{(i)} \in \cA^{(i)} \setminus \cA^{(i-1)}$ where we have defined $\cA^{(-1)} := \emptyset$. For $i=0,\ldots, r$, define 
  \begin{align*}
    H'_i & := \cA^{(i)} \cap \set{f_1,\ldots, f_m}\,,\\
    H_i & := H'_i \setminus \cA^{(i-1)}\,.
  \end{align*}
  Clearly $(H_0,\ldots,H_r)$ is a partition of $\set{f_1,\ldots, f_m}$. Furthermore, $f^{(i)}\in H_i$ and hence $H_i \neq \emptyset$. 
  %Let $V$ be the translate of $\cA$ by $-f_{1}$, that is, the vector space spanned by $f_{1} - f_{1}, f_{2} - f_{1}, \dots, f_{m} - f_{1}$.
  %The vector space $V$ has the same dimension as the affine space $\mathcal{A}$.
  %Let $g_{1}, \dots, g_{r}$ be a basis for $V$.
  %Define $V_{i} := \spn{g_{1}, \dots, g_{i}}$.
  %Define a partition $\bc{H_{i}}$ of the set of codewords iteratively as follows.
  %Let $H_{0}' := \bc{f_{1}}$.
  %Let $H_{1}' := \setbuild{f_{j}}{f_{j} \in f_{1} + V_{1}}$.
  %In general, $H_{i}' := \setbuild{f_{j}}{f_{j} \in f_{1} + V_{i}}$.
  %To make the above into a partition, we define $H_0 = H_0'$ and $H_i = H_i' \setminus H_{i-1}'$ for $i > 0$. 
  Let $m_{i} := \abs{H_{i}}$.
  We have $\sum m_i = m$ and each $0 \neq m_{i} < m$ since $m_0 =1$ and $r\geq 1$. Let $r^{(i)}$ be the affine dimension of $H_i$. 

  We apply the inductive hypothesis on the subgraphs induced by $H_{0}, \dots, H_{r}$.
  The induced subgraphs are exactly the agreement graphs of the list of codewords in $H_{i}$.
  Therefore by induction we have
  \[
    E_{H_{i}} \leq (m_i - 1)\cdot n \cdot \tau_{r^{(i)}} + n_{H_{i}} \leq (m_i - 1)\cdot n \cdot \tau_r + n_{H_{i}}.
  \]
  The total number of edges of $G$ is the sum of the number of edges in each  induced graph, therefore
  \begin{align*}
    E_{G} = \sum_{i=0}^r E_{H_i} & \leq \sum_{i=0}^r \br{(m_i - 1) \cdot n\cdot \tau_r + n_{H_{i}}} = m\cdot n \cdot \tau_r - (r+1)\cdot n \cdot \tau_r + \sum_i n_{H_i}\\
    & = (m-1) \cdot n \cdot \tau_r - r \cdot n \cdot \tau_r + \sum_{i} n_{H_{i}}\;.
  \end{align*}
  We now relate the quantities $\sum n_{H_{i}}$ with $n_{G}$.

  Consider any right vertex $j\in [n]$ of $G$.
  If $j$ has degree $0$ in $G$, then $j$ does not contribute to $n_{G}$ or to $n_{H_{i}}$ for any $i$. 

  For each $j \in [n]$ with degree at least $1$ in $G$, let $t_j$ be the number of $i$'s such that there is an edge from $j$ to $H_i$. Then $j$ contributes $t_j$ to $\sum n_{H_i}$ and $1$ to $n_G$. Hence, we have $\sum_{i} n_{H_{i}} - n_{G} = \sum_{j} (t_{j} - 1)$. 

  Using this in the equation above gives
  \begin{align*}
    E_{G} &\leq (m-1) \cdot n \cdot \tau_r + n_{G} - r \cdot n \cdot \tau_r + \sum_{j} (t_{j} - 1)\;.
  \end{align*}

  For each such $j \in [n]$, let $\mathcal{A}_j$ be the affine subspace of $\mathcal{A}$ containing all codewords that agree with the message $y$ at coordinate $j$, and let $r_j$ be its dimension. Note by our construction of the partition $H_0, \ldots, H_r$ that any set of vectors chosen by picking at most one from each $H_i$ are affine independent. Hence, since $j$ has edges to $t_j$ different $H_i$'s, we have that $r_j \geq (t_j - 1)$. 

  By \cref{thm:kopparty-guruswami}, we have $\sum (t_j - 1) \leq \sum r_j \leq r \cdot n \cdot \tau_r$. Combining this with the above equation, we have
  \[
  E_G \leq (m-1)\cdot n \cdot \tau_r + n_G\;.
  \]
  That completes the proof of \cref{lem:chen-zhang}. \hfill \qed

\section{List-size lower bounds for list-recovery}\label{sec:recovery}

Although \cref{thm:chen-zhang} gives optimal bounds for list-decoding of FRS codes, Chen and Zhang also show that an exponential dependence in $\epsilon$ is unavoidable for the question of list-recovery. In this section we give their counter-example. 

Recall that the set of evaluation points for the $\FRS_{k,s}$ code are $\alpha_1,\ldots, \alpha_n$, with $\gamma$ being the generator of $\F_q^*$ used for the folding. For each $i \in [n]$, define the polynomial $Q_i(x)$ defined as
\[
Q_i(x) = (x - \alpha) (x - \gamma \alpha) \cdots (x - \gamma^{s-1}\alpha).
\]
The $i$-th symbol of the $\FRS_{k,s}$ encoding of a polynomial $g$ can equivalently also be thought of as the residue $\inparen{g(x) \bmod Q_i(x)}$. 

\medskip

Define integer parameters $m, p$ such that $m \approx \frac{R}{\epsilon} + 1$ and 
\[
p = \floor{\frac{m \floor{\frac{k-1}{s}}}{m-1}} = \frac{m}{m-1} \cdot \frac{k}{s} - O(1) = n(R + \epsilon) - O(1).
\]
Consider the following set of $m$ polynomials: 
\[
\text{For $i = 1,\ldots, m-1$, \quad} f_i(x) := \prod_{\substack{j\in [p]\\j \neq i \bmod{m}}} Q_i(x).
\]
By the choice of $m$ and $k$, it follows that $\deg f_i \leq (k-1)$ for all $i \in [m]$ since each $f_i$ is a product of at most $\frac{m-1}{m}$ of the $Q_j$'s for $j \in [p]$. 

\begin{lemma}[List-recovery for FRS codes~\cite{ChenZ2025}]
  \label{lem:chen-zhang-list-recovery} Let $B$ be any set of $\ell$ distinct field elements. Consider the set of polynomials
  \[
  \mathcal{G} := \setdef{\beta_1 f_1 + \cdots + \beta_m f_m}{\beta_i \in B}\;.
  \]
  Then, $\abs{\mathcal{G}} = \ell^m$ and, for each $i \in [p]$, we have
  \[
  \abs{ \setdef{\inparen{\FRS_{k,s}(g)}_i}{g\in \mathcal{G}}} \leq \ell\;.
  \]
  (That is, the $\FRS$ encoding of any polynomial in $\mathcal{G}$ takes only one of $\ell$ possible values in the first $p$ coordinates.) 
\end{lemma}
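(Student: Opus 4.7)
The plan is to read both claims off from the divisibility pattern of the $f_j$'s by the moduli $Q_i$. The crucial feature of the standard FRS setup is that the sets $\{\alpha_i, \gamma\alpha_i, \ldots, \gamma^{s-1}\alpha_i\}$ for $i \in [n]$ are pairwise disjoint, so the polynomials $Q_1, \ldots, Q_n$ are pairwise coprime. It follows that for every $i \in [p]$ and $j \in [m]$, the modulus $Q_i$ divides $f_j$ iff $i \not\equiv j \pmod m$, and when $i \equiv j \pmod m$ the residue $f_j \bmod Q_i$ is a nonzero polynomial of degree less than $s$.

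With this in hand, I would dispatch the bound on the $i$-th symbol as follows. Fix $i \in [p]$ and let $j_i \in \{1, \ldots, m\}$ be the unique index with $j_i \equiv i \pmod m$. For any $g = \sum_j \beta_j f_j \in \mathcal{G}$, every term with $j \neq j_i$ vanishes modulo $Q_i$, so
\[
(\FRS_{k,s}(g))_i \;=\; g \bmod Q_i \;=\; \beta_{j_i}\cdot (f_{j_i} \bmod Q_i).
\]
As $\beta_{j_i}$ varies over $B$, this takes at most $|B| = \ell$ values, proving the second claim.

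For $|\mathcal{G}| = \ell^m$, I would verify that $f_1, \ldots, f_m$ are $\bF_q$-linearly independent. Given a dependence $\sum_j \beta_j f_j = 0$, for each $j_0 \in [m]$ pick some $i \in [p]$ with $i \equiv j_0 \pmod m$ (which exists since $p \geq m$), reduce modulo $Q_i$ to obtain $\beta_{j_0}\cdot (f_{j_0} \bmod Q_i) = 0$, and conclude $\beta_{j_0} = 0$ from the nonvanishing of the residue. Then distinct tuples in $B^m$ yield distinct polynomials, giving $|\mathcal{G}| = \ell^m$. The whole argument is essentially bookkeeping; the only point that deserves care is the nonvanishing of $f_{j_0} \bmod Q_i$ when $i \equiv j_0 \pmod m$, which is precisely where the pairwise coprimality of the $Q_i$'s enters.
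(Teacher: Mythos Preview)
Your proposal is correct and follows essentially the same approach as the paper: reduce modulo $Q_i$ to isolate the single surviving term $\beta_{j_i}(f_{j_i}\bmod Q_i)$, which immediately gives both the bound on the $i$-th symbol and (by nonvanishing of the residue) the linear independence of the $f_j$'s. You are in fact slightly more careful than the paper in explicitly invoking the pairwise coprimality of the $Q_i$'s to justify that $f_{j_i}\bmod Q_i \neq 0$, a point the paper leaves implicit.
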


Since $p \approx n(R + \epsilon)$, we have a particular instance of list-recover with each coordinate list-size bounded by $\ell$, with $\ell^{R/\epsilon}$ codewords with fractional agreement of $R + \epsilon$. 

\begin{proof}
  To see that $\abs{\mathcal{G}}$ has size $\ell^m$, we observe that the polynomials $f_1,\ldots, f_m$ are linearly independent. Indeed, if $c_1 f_1 + \cdots + c_m f_{m} = 0$, with $c_1 \neq 0$ (without loss of generality), looking at the equation modulo $Q_1(x)$ yields a nonzero quantity on the left-hand side but zero on the right. 

  As for the second claim, let $g = \beta_1 f_1 + \cdots \beta_m f_m$. Then, observe that $(\FRS_{k,s}(g))_i = g \bmod Q_i(x) = \beta_{i'} \inparen{f_{i'}(x) \bmod Q_i(x)}$ where $i' \in [m]$ is the unique value such that $i' = i \bmod{m}$ (since all other $f_j$'s are divisible by $Q_i$). As $\beta_i$'s come from a set of size at most $\ell$, the $i$-th coordinate of $\FRS_{k,s}(g)$ will be one of the $\ell$ scalings of $\inparen{f_{i'}(x) \bmod Q_i(x)}$.
\end{proof}

  {\small 
  \bibliographystyle{prahladhurl}
  %\bibliography{/Users/prahladh/Dropbox/Documents/academic/papers/jrnl-names-abb,/Users/prahladh/Dropbox/Documents/academic/papers/prahladhbib,/Users/prahladh/Dropbox/Documents/academic/papers/crossref}
  \bibliography{ref}
}

\appendix

\section{Proof of the Guruswami-Kopparty lemma}
\label{sec:proof-of-guruswami-kopparty}

For the sake of completeness, we give a proof of the \cref{thm:kopparty-guruswami} (restated below):

\guruswamikopparty*

\begin{proof}
  Let the $r$-dimensional affine space $\mathcal{A}$ be $f_0 + \operatorname{\F-span}\inbrace{f_1,\ldots, f_r}$, where $f_1,\ldots, f_r$ are linearly independent polynomials of degree less than $k$. The \emph{Folded-Wronskian}, $W_\gamma(f_1,\ldots, f_r)$ of these polynomials is defined as the following determinant of an $r\times r$ matrix.
  \[
    W_\gamma(f_1,\ldots, f_r) = \abs{
      \begin{array}{cccc}
        f_1(x) & f_2(x) & \cdots & f_r(x)\\
        f_1(\gamma x) & f_2(\gamma x) & \cdots & f_r(\gamma x)\\
        \vdots & \vdots & \ddots & \vdots\\
        f_1(\gamma^{r-1} x) & f_2(\gamma^{r-1} x) & \cdots & f_r(\gamma^{r-1} x)
      \end{array}}
  \]
  We will use $\mathcal{W}_\gamma(f_1,\ldots, f_r)$ to refer to the $r\times r$ matrix above. 
  The above polynomial has degree at most $rk$, and since $f_1,\ldots, f_r$ are linearly independent, it is known that the Folded-Wronskian is a nonzero polynomial. We will relate the $r_i$'s with appropriate roots of $W_\gamma(f_1,\ldots, f_r)$ and their multiplicities. 

  Fix a coordinate $i\in [n]$ and $\alpha_i$ being the correspondent element of $\F$. The space $\mathcal{A}_i$ can be equivalently expressed as all polynomials form $f_0 + \beta_1 f_1 + \cdots + \beta_r f_r$ (where $\beta_1,\ldots, \beta_r \in \F_q$) such that 
  \[
  \beta_1 f_1(\gamma^{j} \alpha_i) + \cdots + \beta_r f_r(\gamma^{j} \alpha_i) = (y_i)_j - f_0(\gamma^j \alpha_i)\quad\text{for $j = 0,\ldots, s-1$}
  \]
  In other words, $\beta_1,\ldots, \beta_r$ are solutions to the linear system
  \[
  \insquare{
    \begin{array}{ccc}
      f_1(\alpha_i) & \cdots & f_r(\alpha_i)\\
      f_1(\gamma \alpha_i) & \cdots & f_r(\gamma \alpha_i)\\
      \vdots & \ddots & \vdots \\
      f_1(\gamma^{s-1} \alpha_i) & \cdots & f_r(\gamma^{s-1} \alpha_i)
    \end{array}
  }_{s\times r}
  \insquare{\begin{array}{c}
    \beta_1\\
    \vdots \\
    \beta_r
  \end{array}
  }_{r \times 1} = 
  \insquare{\begin{array}{c}
    (y_i)_0 - f_0(\alpha_i)\\
    (y_i)_1 - f_0(\gamma \alpha_i)\\
    \vdots \\
    (y_i)_{s-1} - f_0(\gamma^{s-1} \alpha_i)
  \end{array}
  }_{s\times 1}.
  \]
  Hence, if $\dim \mathcal{A}_i = r_i$, then the rank of the $s\times r$ matrix on the LHS is at most $r - r_i$. Furthermore, note that for any $\sigma \in \set{\alpha_i, \gamma \alpha_i, \dots, \gamma^{s - r}\alpha_i}$, the matrix $\mathcal{W}_\gamma(f_1,\ldots, f_r)\mid_{x = \sigma}$ is an $r\times r$ submatrix of the above $s\times r$ matrix. Since the above $s\times r$ matrix has a rank-deficit of $r_i$, we have that each such $\sigma$ must be a root of $W_\gamma(f_1,\ldots, f_r)$ of multiplicity at least $r_i$. 
  Hence,
  \begin{align*}
  \sum_{i \in [n]} r_i (s - r + 1) &\leq \deg(W_\gamma(f_1,\ldots, f_r))  \leq rk\\
  \implies \sum_{i\in [n]} r_i & \leq \frac{rk}{s - r + 1} = r \cdot \tau_r \cdot n.\qedhere
  \end{align*}
\end{proof}

{\let\thefootnote\relax
\footnotetext{\textcolor{\gitinfonotecolour}{\gitinfonote}
}}

\end{document}